\documentclass[12pt]{article}

\usepackage{amsmath}
\usepackage{amsthm}
\usepackage{amssymb}
\usepackage{rotating}
\usepackage{array}
\usepackage{enumitem}
\usepackage{mathtools}
\usepackage{setspace}
\usepackage{fullpage}
\usepackage{hyperref}
\usepackage{booktabs}
\usepackage{algorithm}
\usepackage[noend]{algpseudocode}
\usepackage{xspace}
\usepackage{authblk}

\newtheorem{definition}{Definition}
\newtheorem{theorem}{Theorem}
\newtheorem{lemma}{Lemma}

\newcommand{\R}{\mathbb{R}}
\newcommand{\N}{\mathbb{N}}
\newcommand{\G}{\mathbb{G}}

\newcommand{\BAn}{Barab\'{a}si-Albert\xspace}
\newcommand{\ERn}{Erd\H{o}s-R\'{e}nyi\xspace}
\newcommand{\WSn}{Watts-Strogatz\xspace}

\newcommand{\cdeg}{c_{deg}}
\newcommand{\cclo}{c_{clo}}
\newcommand{\cbet}{c_{bet}}
\newcommand{\cged}{c_{ged}}

\newcommand{\HH}{H}
\newcommand{\FR}{\hat{R}}
\newcommand{\RR}{R^*}
\newcommand{\thr}{\theta}

\title{Network Members Can Hide from Group Centrality Measures}

\author[a,b]{Marcin Waniek}
\author[a]{Talal Rahwan}

\affil[a]{Computer Science, New York University Abu Dhabi, Abu Dhabi, UAE}
\affil[b]{Institute of Informatics, University of Warsaw, Warsaw, Poland}

\date{}

\begin{document}

\maketitle

\begin{abstract}
Group centrality measures are a generalization of standard centrality, designed to quantify the importance of not just a single node (as is the case with standard measures) but rather that of a group of nodes. Some nodes may have an incentive to evade such measures, i.e., to hide their actual importance, in order to conceal their true role in the network. A number of studies have been proposed in the literature to understand how nodes can rewire the network in order to evade standard centrality, but no study has focused on group centrality to date. We close this gap by analyzing four group centrality measures: degree, closeness, betweenness, and GED-walk. We show that an optimal way to rewire the network can be computed efficiently given the former measure, but the problem is NP-complete given closeness and betweenness. Moreover, we empirically evaluate a number of hiding strategies, and show that an optimal way to hide from degree group centrality is also effective in practice against the other measures. Altogether, our results suggest that it is possible to hide from group centrality measures based solely on the local information available to the group members about the network topology.
\end{abstract}

\section{Introduction}
\label{sec:introduction}

As networks continue to permeate many aspects of our lives, the methods that analyze network structures grow in importance. One such method involves using group centrality measures, which quantify the importance of a group of nodes in a network~\cite{everett1999centrality}. These can be thought of as a generalization of standard centrality measures~\cite{bonacich1987power} which evaluate the relative importance of just a single node. It is worth noting that simply aggregating the standard centrality scores of a group of nodes misses some crucial aspects that are captured by the group centrality measures~~\cite{everett2005extending}. Group centrality was used in a wide variety of applications, from the analysis of the citation network of journals~\cite{ni2011degree}, through selecting the facility locations~\cite{fushimi2020facility}, to the deployment strategy of roadside units~\cite{kchiche2009access}.

There are cases where a group of nodes might prefer to avoid getting their importance recognized by an outside observer. Imagine, for example, a cabal of leaders of a criminal or terrorist organization, who would prefer to conceal their significance from law enforcement agencies~\cite{lindelauf2013cooperative,gialampoukidis2016key}. Similarly, consider a group of banks or other institutions colluding in a financial transactions network, and wishing to obfuscate their influence over the system~\cite{chan2018systemic}. Finally, imagine a situation of a security designer wishing to obscure the key components of a system from a potential attacker~\cite{maccari2016computation}. In any of these situations, algorithmic ways of misguiding group centrality analysis might be of use.

A number of works in the literature analyzes the problem of hiding a single node from standard centrality measures~\cite{waniek2018hiding,was2020manipulability,waniek2021strategic}. Similarly, several works considered the task of hiding groups of nodes from standard centrality measures aggregated by taking a maximal centrality value of all members of the group~\cite{waniek2017construction,dey2019covert}. However, the problem of hiding nodes from group centrality measures have not been considered to date.

We close this gap in the literature by providing the first rigorous analysis of the problem of hiding from group centrality measures. To this end, we assume the perspective of a group of \textit{evaders}, i.e., network nodes wishing to evade the analysis performed using group centrality measures. The evaders aim to decrease their group centrality by removing some of their local network connections (we show that it is impossible to decrease their centrality by adding new connections). While designing the problem formulation, we make sure to make it flexible to domain-specific considerations, as we allow to specify which network connections can be removed by the evaders. As a result, the evaders can focus their attention on  relations that can be obscured (e.g., communications using satellite phones in a covert network, or secret meetings between representatives of fraudulent financial institutions), while accepting that some connections are impossible to deny (e.g., blood relations, or transactions on public record).

Having define the problem, we analyze it both theoretically, from the point of view of computational complexity, as well as empirically, using simulations on randomly-generated and real-life networks. We investigate the complexity of finding an optimal way of hiding (i.e., a set of edges the removal of which results in a maximal group centrality drop), and show that it is achievable in polynomial time when hiding from degree group centrality, but the problem becomes intractable for closeness and betweenness centrality. We also show empirically that the optimal strategy against group degree centrality is also effective in practice against the other centralities, including GED-walk. Finally, we investigate what aspects of the network structure itself are crucial to the effectiveness of hiding, showing that its performance is best in small and sparse networks.

The remainder of this article is organized as follows. First, we discuss the relevant works from the literature (Section~\ref{sec:related-work}). Second, we present the preliminaries and the necessary notation (Section~\ref{sec:preliminaries}). Third, we formally state the problem of hiding from group centrality measures and analyze its computational complexity (Section~\ref{sec:problem-statement}). Fourth, we empirically evaluate a number of hiding strategies (Section~\ref{sec:experimental-analysis}). Conclusions follow (Section~\ref{sec:conclusions}).

\section{Related Work}
\label{sec:related-work}

Our work is closely related to the literature on group centrality measures. Everett and Borgatti~\cite{everett1999centrality} define the group variants of the most popular standard centrality measures, i.e., methods designed to measure the importance of a single node. Part of the literature is concerned with the problem of computing group centrality in an efficient way, as they are often more complex than single-node centrality~\cite{puzis2007fast,brandes2008variants}. Significant attention in the literature is devoted to the problem of identifying the most important group according to some centrality measure~\cite{dolev2009incremental,chen2016efficient,veremyev2017finding,Angriman2020Group}. Notice that in our work we assume that the composition of the group of evaders is predefined.

Our work is also related to the literature on hiding from single-node centrality measures in social networks~\cite{was2020manipulability,waniek2021strategic,waniek2023hiding}. Some of these works analyzed the problem of hiding a group of nodes from single-node centrality measures~\cite{waniek2017construction,dey2019covert,waniek2021members}, where the centrality of the group is taken as the maximum over the centralities of its members. Hiding from single-node centrality measures was also considered in alternative network structures, such as multilayer~\cite{waniek2020hiding} and temporal networks~\cite{waniek2022hiding}. The main difference between our work and these mentioned above is that we consider hiding from group centrality measures. Also noteworthy is a growing literature on hiding groups of nodes from community detection algorithms~\cite{waniek2018hiding,chen2019ga,li2020adversarial}. However, the problem considered in this literature is of a very different nature than our work, as community detection algorithms allow to uncover the fact of existence of a previously unknown group, while the group centrality measures allow to quantify the importance of a known group.

A related body of literature concerns itself with hiding from other types of social network analysis tools. These types of evasion techniques are often motivated by the need of privacy protection by the network members~\cite{jin2020adversarial}. Social media users who do not wish some of their undisclosed relationship to be uncovered, might be interested in heuristic solutions designed to mislead link prediction algorithms~\cite{yu2018target,waniek2019hide,zhou2019attacking}. Others might want to counter the analysis performed using node similarity measures~\cite{dey2020manipulating}. Yet another class of techniques allow to hide the identity of the source of network diffusion from the source detection algorithms~\cite{waniek2021social}.

The work that is probably closest in concept to ours is Medya et al.~\cite{medya2018group}. The authors analyze the problem of maximizing the centrality of a group of nodes via network modifications. There is a number of differences between their approach and ours. First, their goal is to increase the group centrality value, whereas we focus on decreasing it. Second, they achieve their goal by adding edges to the network, whereas we consider the problem of removing the edges. Finally, they consider only one centrality measure, namely the coverage centrality (a variant of betweennness centrality), whereas we consider four centrality measures, namely degree, closeness, betweenness, and GED-walk.

\section{Preliminaries}
\label{sec:preliminaries}

In this section we present the basic network notation and concepts used throughout this work. Table~\ref{tab:notation} presents the summary of the notation.

\begin{table}[t]
\centering
\begin{tabular}{c l}
\toprule
Symbol & Meaning \\
\midrule
$V$ & The set of nodes \\
$E$ & The set of edges \\
$N_G(v)$ & The set of neighbors of $v$ in $G$ \\
$N_G(S)$ & The set of neighbors of $S \subseteq V$ in $G$ \\
$d_G(v,w)$ & The distance between $v$ and $w$ in $G$ \\
$\cdeg$ & The degree group centrality \\
$\cclo$ & The closeness group centrality \\
$\cbet$ & The betweenness group centrality \\
$\cged$ & The GED-walk group centrality \\
$\HH$ & The group of evaders \\
$\thr$ & The safety threshold \\
$\FR$ & The set of edges allowed to be removed \\
$b$ & The hiding budget \\
\bottomrule
\end{tabular}
\caption{The summary of notation used in the article.}
\label{tab:notation}
\end{table}

\subsection{Network Notation}

Let $G = (V,E) \in \G$ denote a network with the set of nodes $V$ and the set of edges $E \subseteq V \times V$. In this work we consider undirected networks without self-loops, i.e., we do not distinguish between edges $(v,w)$ and $(w,v)$, and we assume that $\forall_{v} (v,v) \notin E$. When removing edges $R \subseteq E$ from a network $G=(V,E)$, let $G \setminus R$ denote $(V, E \setminus R)$. Similarly, when removing nodes $S \subseteq V$ from a network $G=(V,E)$, let $G \setminus S$ denote $(V \setminus S, E \setminus \{(v,w) \in E : v \in S \lor w \in S \})$.

Let $N_G(v)$ denote the set of neighbors of $v$ in $G$, i.e., $N_G(v) = \{ w \in V : (v,w) \in E \}$. We also extend this notion to the set of neighbors of a group $N(S)$, i.e., $N_G(S) = \{ w \in V \setminus S : \exists_{v \in S} (v,w) \in E \}$. Let $d_G(v,w)$ denote the distance between $v$ and $w$, i.e., the number of edges in a shortest path between $v$ and $w$ in $G$. We assume that if there is no path in $G$ between $v$ and $w$, e.g., because $G$ is not connected, then $d_G(v,w)=\infty$. If the network is clear from context we omit it from the notation, e.g., by writing $N(S)$ instead of $N_G(S)$.

\subsection{Group Centrality}

A group centrality measure quantifies the importance of a group of nodes in a given network~\cite{everett1999centrality}. Formally, it is a function $c : \G \times 2^V \rightarrow \R$, where $c(G,S)$ is the centrality of group $S$ in network $G$. In this work we consider the following group centrality measures:

\begin{itemize}
\item \textbf{Degree group centrality}~\cite{everett1999centrality} is proportional to the number of non-group nodes adjacent to at least one member of the group:
\[
\cdeg(G,S) = \frac{|N_G(S)|}{|V|-|S|}.
\]
\item \textbf{Closeness group centrality}~\cite{everett1999centrality} is proportional to the inverse sum of distances from the group to all non-group nodes:
\[
\cclo(G,S) = \frac{|V|-|S|}{\sum_{v \in V \setminus S} \min_{w \in S} d(w,v)}.
\]
In case there is no path from any node in $S$ to some node $v \in V \setminus S$, i.e., $\sum_{v \in V \setminus S} \min_{w \in S} d(w,v)=\infty$, we assume that $\cclo(G,S)=0$.
\item \textbf{Betweenness group centrality}~\cite{everett1999centrality} is proportional to percentage of the shortest paths between pairs of non-group nodes that pass through the group:
\[
\cbet(G,S) = \frac{2 \sum_{s \neq t \in V \setminus S} \frac{\pi_{s,t}(S)}{\pi_{s,t}}}{(|V|-|S|)(|V|-|S|-1)},
\]
where $\pi_{s,t}$ is the number of shortest paths between $s$ and $t$, and $\pi_{s,t}(S)$ is the number of paths in $\pi_{s,t}$ that include at least one node from $S$. In case there are no paths from $s$ to $t$, we assume that $\frac{\pi_{s,t}(S)}{\pi_{s,t}} = 0$. The value of the betweenness centrality can be computed efficiently using the algorithm by Brandes~\cite{brandes2008variants}.
\item \textbf{GED-walk group centrality}~\cite{Angriman2020Group} is proportional to the number of random walks in the network that include at least one node from the group:
\[
\cged(G,S) = \sum_{i=1}^\infty \alpha^i \phi_i(S),
\]
where $\alpha > 0$ is the decay parameter, and $\phi_i(S)$ is the number of walks of length $i$ that include at least one node from $S$. The above centrality formula includes an infinite sum. To ensure that the value of the sum converges, in this work we use $\alpha=\frac{1}{\delta^*_G}$, where $\delta^*_G$ is the maximum degree in network $G$. In practical applications, the centrality value can be approximated using an algorithm by Angriman et al.~\cite{Angriman2020Group}. The GED-walk group centrality is inspired by the single-node Katz centrality~\cite{katz1953new}.
\end{itemize}

Let us comment on the selection of group centrality measures considered in this work. Degree, closeness, and betweenness are the most fundamental measures, the latter two being based on the concept of shortest paths, introduced in the work that defined the concept of group centrality~\cite{everett1999centrality}. On the other hand, the GED-walk measure was introduced much more recently, and it is an example of a feedback-based centrality~\cite{das2018study}. Altogether, the considered group centrality measures represent a variety of approaches to the problem of quantifying the importance of a group of nodes.

\section{Problem Statement and Theoretical Analysis}
\label{sec:problem-statement}

We now define the decision and the optimization version of the problem faced by a group of evaders hiding from group centrality measures.

\begin{definition}[Group Hiding]
\label{def:dec-problem}
An instance of this problem is defined by a tuple $(G,\HH,c,\thr,\FR,b)$, where $G=(V,E)$ is a network, $\HH \subseteq V$ is the group of evaders, $c$ is a group centrality measure, $\thr \in \R$ is the safety threshold, $\FR \subseteq \HH \times (\HH \cup N(\HH))$ is the set of edges allowed to be removed, and $b \in \N$ is the hiding budget.  The goal is to identify $\RR \subseteq \FR$ such that $c(G \setminus \RR, \HH) \leq \thr$ and $|\RR| \leq b$.
\end{definition}

\begin{definition}[Minimum Group Hiding]
\label{def:apx-problem}
An instance of this problem is defined by a tuple $(G,\HH,c,\thr,\FR)$, where $G=(V,E)$ is a network, $\HH \subseteq V$ is the group of evaders, $c$ is a group centrality measure, $\thr \in \R$ is the safety threshold, and $\FR \subseteq \HH \times (\HH \cup N(\HH))$ is the set of edges allowed to be removed.  The goal is to identify $R \subseteq \FR$ such that $c(G \setminus \RR, \HH) \leq \thr$ and the size of $R$ is minimal.
\end{definition}

As can be seen from the definitions, the goal in the decision version of the problem is to achieve a predefined level of safety (expressed by the threshold $\thr$) within certain budget $b$ (expressed as the number of removed edges). In the optimization version of the problem the goal is to achieve a predefined level of safety (again, expressed by $\thr$) while removing as few edges from the network as possible. A definition involving the safety threshold is commonly used in the literature on hiding from network measures producing a numerical value~\cite{dey2019covert,waniek2020hiding,waniek2021members,waniek2022social}, often in the context of ranking of all nodes. In this case, the ranking of all groups of nodes would contain an exponential number of positions, hence we focus on satisfying a threshold on the value of the centrality measure.

Let us now comment on the choice of actions available to the evaders. We assume that the evaders can only modify edges that are incident with at least one member of the group, hence $\FR \subseteq \HH \times (\HH \cup N(\HH))$. This assumption is motivated by the fact that in real-life social networks it is extremely difficult to freely affect connections that we are not part of. We only allow the evaders to remove, and not add, edges, as the addition of edges incident with the members of the group can only increase their group centrality for considered measures. In particular, notice that the addition of such an edge can only make a new node a neighbor of the group (thereby increasing group degree centrality), can only decrease the distance to non-group nodes (thereby increasing group closeness centrality), and can only create a new shortest path or a new random walk including a member of the group (thereby increasing group betweenness and GED-walk centralities). Finally, note that, with our problem formulation, it is possible to forbid the removal of some of the edges that are incident with the evaders (hence $\FR \subseteq \HH \times (\HH \cup N(\HH))$ and not $\FR = \HH \times (\HH \cup N(\HH))$), as some connections in social networks might be either impossible to sever (e.g., blood ties), or crucial to the activities of the network.

\begin{table}[t]
\centering
\begin{tabular}{l c c}
\toprule
Centrality & Group Hiding & Minimum Group Hiding \\
\midrule
Degree & \multicolumn{2}{c}{P via Algorithm~\ref{alg:degree-opt} } \\
Closeness & NP-complete & ---\\
Betweenness & NP-complete & APX-hard \\
\bottomrule
\end{tabular}
\caption{The summary of our theoretical findings.}
\label{tab:theoretical-findings}
\end{table}

We now move to analyzing the computational complexity of the problems faced by the group of evaders. In this endeavor, we focus on the centrality measures that provide a concise closed-form formula, i.e., the degree, closeness, and betweenness measures, as they are more amenable to theoretical analysis. Table~\ref{tab:theoretical-findings} summarizes our theoretical findings. The GED-walk measure, which in practice is computed using an approximation algorithm, will be examined more closely in the empirical section. As can be seen, only the simplest group centrality measure, i.e., the degree centrality, allows to find the optimal solution to both versions of the problem in polynomial time. The decision version of the problem is NP-complete for both closeness and betweenness centrality measures, while the optimization version of the problem is also APX-hard given the betweenness centrality. These findings indicate that developing an efficient optimal strategy of hiding a group of nodes from the closeness betweenness centrality measures is most probably impossible.

\begin{algorithm}[tb]
\caption{Optimally decreasing degree group centrality}
\label{alg:degree-opt}
\textbf{Input}: network $G=(V,E)$, group of evaders $\HH \subseteq V$, edges allowed to be removed $\FR \subseteq \HH \times (\HH \cup N(\HH))$, the hiding budget $b \in \N$.
\\
\textbf{Output}: $\RR \subseteq \FR$ the removal of which decreases group degree of $\HH$ optimally within budget
\begin{algorithmic}[1]
\For{$v \in N(\HH)$} \label{ln:degree-opt-1}
\State $X_v = (\{v\} \times \HH) \cap E$ \label{ln:degree-opt-2}
\EndFor
\State $v^* = \langle v \in N(\HH) : X_v \subseteq \FR \rangle$ sorted by non-decreasing $|X_v|$ \label{ln:degree-opt-3}
\State $\RR = \emptyset$ \label{ln:degree-opt-4}
\State $i = 1$ \label{ln:degree-opt-5}
\While{$|\RR| + |X_{v^*_i}| \leq b$} \label{ln:degree-opt-6}
\State $\RR = \RR \cup X_{v^*_i}$ \label{ln:degree-opt-7}
\State $i = i + 1$ \label{ln:degree-opt-8}
\EndWhile
\State \textbf{return} $\RR$ \label{ln:degree-opt-9}
\end{algorithmic}
\end{algorithm}

\begin{theorem}
\label{thr:degree-opt}
Algorithm~\ref{alg:degree-opt} finds $\RR$ the removal of which results in maximal decrease in the degree group centrality of $\HH$ within the budget $b$, i.e.:
\[
\cdeg(G \setminus \RR, \HH) = \min_{R \subseteq \FR : |R| \leq b} \cdeg(G \setminus R, \HH).
\]
Algorithm~\ref{alg:degree-opt} can be used to solve Group Hiding and Minimum Group Hiding problems given the degree group centrality in polynomial time.
\end{theorem}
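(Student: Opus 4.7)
The plan is to reduce the problem to a unit-profit selection problem by exploiting the algebraic structure of $\cdeg$. Since the denominator $|V|-|\HH|$ is invariant under edge deletion, minimizing $\cdeg(G \setminus R, \HH)$ is equivalent to minimizing $|N_{G \setminus R}(\HH)|$. For each $v \in N_G(\HH)$, the edges linking $v$ to $\HH$ are exactly $X_v = (\{v\} \times \HH) \cap E$, and $v$ is excluded from the neighborhood precisely when every edge of $X_v$ has been deleted. Thus $v$ can be ``erased'' at all only when $X_v \subseteq \FR$, and erasing it costs exactly $|X_v|$ removals.

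The second key observation is that the sets $\{X_v : v \in N(\HH)\}$ are pairwise disjoint, since each edge of $X_v$ is uniquely identified by its non-$\HH$ endpoint $v$. Consequently, for any set $S$ of eliminable neighbors, the union $\bigcup_{v \in S} X_v$ is a feasible removal of size $\sum_{v \in S} |X_v|$, and no removal can erase all of $S$ using fewer edges. The optimization problem thus reduces to choosing $S \subseteq \{v \in N(\HH) : X_v \subseteq \FR\}$ maximizing $|S|$ subject to $\sum_{v \in S} |X_v| \leq b$. Because every candidate contributes the same unit profit (one fewer neighbor), a standard exchange argument shows that greedily picking candidates in non-decreasing order of $|X_v|$ until the next item overruns the budget is optimal; the non-decreasing order further guarantees that once one item overflows, no later one can fit either, justifying the single-pass while-loop. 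This is exactly what Algorithm~\ref{alg:degree-opt} does, so its output attains $\min_{R \subseteq \FR,\,|R| \leq b} \cdeg(G \setminus R, \HH)$.

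For the complexity claim, constructing the $X_v$, filtering by $X_v \subseteq \FR$, sorting, and running the greedy loop each take time polynomial in $|V|+|E|$, so Group Hiding is solved by invoking Algorithm~\ref{alg:degree-opt} and testing whether the resulting centrality is at most $\thr$. For Minimum Group Hiding, the same machinery suffices: running the algorithm with $b=|\FR|$ produces the full sorted list of eliminable neighbors, and the minimum number of removals needed to drop $\cdeg$ to $\thr$ is the shortest prefix whose cumulative size forces $|N(\HH)|$ below $\thr\bigl(|V|-|\HH|\bigr)$. The main obstacle I expect is not an intricate argument but the need to state the disjointness of the $X_v$ explicitly and to justify that uniform profits make ``cheapest-first'' globally optimal rather than merely a heuristic; both follow with one-line exchange arguments.
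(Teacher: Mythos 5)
Your proposal is correct and follows essentially the same route as the paper's proof: reduce to minimizing $|N(\HH)|$, observe that a neighbor $v$ can only be eliminated by deleting all of $X_v$ (which requires $X_v \subseteq \FR$), and argue that cheapest-first selection is optimal because each eliminable neighbor contributes the same unit profit. Your explicit mention of the pairwise disjointness of the $X_v$ and the exchange argument makes the optimality step somewhat more rigorous than the paper's brief contradiction argument, and your single-run prefix treatment of Minimum Group Hiding is a harmless simplification of the paper's loop over all budgets $b \in \{0,\ldots,|\FR|\}$.
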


\begin{proof}
First, let us comment on using Algorithm~\ref{alg:degree-opt} to solve both problems. Let $(G,\HH,c,\thr,\FR,b)$ be an instance of the Group Hiding problem, and let $\RR$ be the output of Algorithm~\ref{alg:degree-opt} given input $G,\HH,\FR,b$. The given instance can be solved by checking whether $\cdeg(G \setminus \RR, \HH) \leq \thr$. If this is the case, then $\RR$ is a solution. Otherwise, the solution does not exist, as $\RR$ provides the maximum decrease within the budget $b$. Similarly, let $(G,\HH,c,\thr,\FR)$ be an instance of the Minimum Group Hiding problem, and let $\RR_b$ be the output of Algorithm~\ref{alg:degree-opt} given input $G,\HH,\FR,b$. The given instance by running Algorithm~\ref{alg:degree-opt} for $b \in \{0, \ldots, |\FR|\}$, and taking $\RR_{b^*}$ where $b^*$ is the smallest $b \leq |\FR|$ such that $\cdeg(G \setminus \RR_b, \HH) \leq \thr$. If no such $b^*$ exists, then the given instance does not have a solution.

We will show that Algorithm~\ref{alg:degree-opt} finds an optimal way to decrease the degree group centrality of $\HH$ within the budget $b$. Notice that the only way to decrease the value of degree group centrality of $\HH$ is to decrease the value of $N(\HH)$. The only way to do it is to completely disconnect a node $v$ in $N(\HH)$ from all nodes in $\HH$ by removing $X_v$ computed in lines~\ref{ln:degree-opt-2}--\ref{ln:degree-opt-3}. Hence, the goal is to accomplish this feat for as many nodes in $N(\HH)$ as possible.

A given node $v \in N(\HH)$ can be disconnected from $\HH$ if and only if all edges connecting $v$ with the nodes in $\HH$ belong to $\FR$. We compute the list of these exact nodes in line~\ref{ln:degree-opt-3}.

As for the proof that the algorithm is optimal, assume that there exists a better solution, i.e., a set $R'$ the removal of which disconnects more nodes in $N(\HH)$ from $\HH$ than the removal of $\RR$. Since $\RR$ is selected based on non-decreasing sizes of $X_v$, the sum of sizes of $X_v$ in $R'$ has to be greater than that in $\RR$. However, the construction of $\RR$ is continued until the sum of sizes in $X_v$ exceeds $b$. Hence, the size of $R'$ has to be greater than the budget $b$.

The time complexity of Algorithm~\ref{alg:degree-opt} is $\mathcal{O}(|E|+|V|\log|V|)$, as computing the values of $X_v$ in lines~\ref{ln:degree-opt-1}--\ref{ln:degree-opt-2} can be done in $\mathcal{O}(|E|)$ steps, sorting in line~\ref{ln:degree-opt-3} can be done in $\mathcal{O}(|V|\log|V|)$ steps, while the loop in lines~\ref{ln:degree-opt-6}--\ref{ln:degree-opt-8} is executed at most $\mathcal{O}(|V|)$ times, with each execution completed in constant time.
\end{proof}

\begin{theorem}
\label{thr:closeness-dec}
The Group Hiding problem is NP-complete given the closeness group centrality.
\end{theorem}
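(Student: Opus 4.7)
The plan is to first observe that Group Hiding lies in NP and then prove NP-hardness by a reduction from CLIQUE. Membership in NP is immediate: given a candidate $\RR \subseteq \FR$ with $|\RR| \leq b$ as witness, one can perform a BFS from each evader in $G \setminus \RR$, sum the minimum distances from $\HH$ to each non-evader node, and check $\cclo(G \setminus \RR, \HH) \leq \thr$ in polynomial time.

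For NP-hardness, I will reduce from CLIQUE. Given $(G' = (V', E'), k)$, I construct a Group Hiding instance with a single evader $\HH = \{h\}$. The vertex set is $\{h\} \cup \{x_v : v \in V'\} \cup \{y_{uv} : (u,v) \in E'\} \cup \{a, z\}$. The edges are (i) the removable edges $(h, x_v)$ for each $v \in V'$, constituting $\FR$; (ii) non-removable $(x_u, y_{uv})$ and $(x_v, y_{uv})$ for each $(u,v) \in E'$; and (iii) non-removable skeleton edges $(h, a)$, $(a, z)$, $(z, x_v)$ for every $v \in V'$, and $(z, y_{uv})$ for every edge of $G'$. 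The budget is $b = k$. The role of the skeleton is to cap every distance from $h$ at $3$, which blocks the otherwise trivial strategy of disconnecting components to force infinite distances.

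I will then carry out a short case analysis showing that if $R_V \subseteq V'$ denotes the set of vertices whose removable edge is cut, then each $x_v$ for $v \notin R_V$ stays at distance $1$, each $x_v$ for $v \in R_V$ goes to distance exactly $3$ (via the skeleton), and each $y_{uv}$ is at distance $2$ when at least one of $u,v$ lies outside $R_V$, and at distance $3$ when both lie in $R_V$. Summing contributions yields
\[
\sum_{w \in V \setminus \HH} d(h, w) \;=\; |V'| + 2|R_V| + 2|E'| + y + 3,
\]
where $y$ is the number of $G'$-edges induced inside $R_V$. I will set $\thr$ so that $\cclo \leq \thr$ becomes equivalent to $2|R_V| + y \geq 2k + \binom{k}{2}$. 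Since $|R_V| \leq k$ and $y \leq \binom{|R_V|}{2} \leq \binom{k}{2}$, this inequality can hold only when $|R_V| = k$ and $R_V$ induces all $\binom{k}{2}$ possible edges, i.e., forms a $k$-clique of $G'$. Hence the constructed instance is a YES-instance of Group Hiding iff $G'$ contains a $k$-clique.

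The main obstacle will be the distance case analysis: I need to verify carefully that the skeleton rules out every hypothetical longer path, so that the closed-form expression above is exact regardless of $R_V$, and that the tight bound $2|R_V| + y \leq 2k + \binom{k}{2}$ achieves equality precisely for $k$-cliques. Once these are established, setting $\thr = (|V|-1)/(|V'| + 2k + 2|E'| + \binom{k}{2} + 3)$ yields a polynomial-time reduction with a rational threshold, completing the proof.
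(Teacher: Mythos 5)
Your proposal is correct and follows essentially the same route as the paper's proof: a reduction from $k$-Clique with a single evader $h$, pendant ``edge'' nodes whose distance from $h$ increases exactly when both of their endpoints' removable edges are cut, a non-removable bypass structure that caps all distances at $3$ (the paper uses a single hub $x$ adjacent to $h$ and all $v_i$ where you use the $h$--$a$--$z$ skeleton, a cosmetic difference that only changes the coefficient of $|R_V|$ in the distance sum), and a threshold tuned so that, by the tightness of $2|R_V|+y \leq 2k+\binom{k}{2}$, it is met precisely when the cut edges correspond to a $k$-clique. No gaps; your case analysis of the distances and the tightness argument match the paper's Lemma~\ref{lem:construct-closeness} and its use.
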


\begin{proof}
The problem is trivially in NP, since after the removal of a given $\RR \in \FR$ we can compute the closeness group centrality of $\HH$ in polynomial time to confirm it is at most $\thr$.

We will prove the NP-hardness of the Group Hiding problem by showing a reduction from the Finding $k$-Clique problem. An instance $(G_C,k)$ of this problem consists of a network $G_C=(V_C,E_C)$, and a constant $k \in \N$. The goal is to determine whether there exist $k$ nodes that induce a clique in $G_C$. The Finding $k$-Clique problem is one the Karp's classic NP-complete problems~\cite{karp1972reducibility}.

\begin{figure}[t]
\centering
\includegraphics[width=.7\columnwidth]{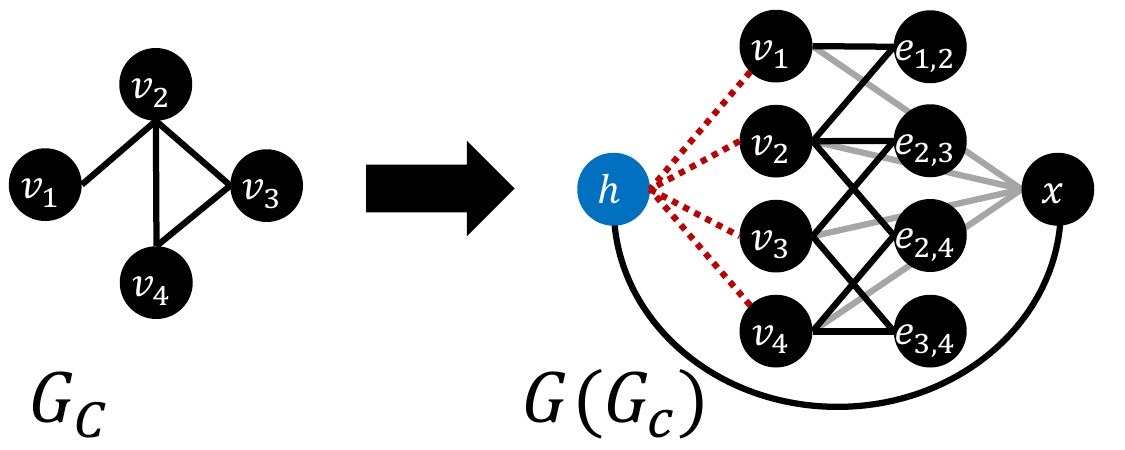}
\caption{Construction used in the proof of Theorem~\ref{thr:closeness-dec}. The blue node belongs to $\HH$, and red dotted edges belong to $\FR$ in the Group Hiding problem instance. Edges between $x$ and nodes $v_i$ are marked grey for better readability.}
\label{fig:closeness-dec}
\end{figure}

Let $X=(G_C,k)$ be an arbitrary instance of the Finding $k$-Clique problem where $V_C=\{v_1,\ldots,v_n\}$. We first construct a network $G(G_C)=(V,E)$ (an example of this construction is presented in Figure~\ref{fig:closeness-dec}):
\begin{itemize}
\item $V = V_C \cup \{h,x\} \cup \bigcup_{(v_i,v_j) \in E_C} \{e_{i,j}\}$,
\item $E = \{(h,x)\} \cup \{(h,v_i), (v_i,x) : v_i \in V_C \} \cup \{(e_{i,j},v_i),(e_{i,j},v_j) : e_{i,j} \in V\}$.
\end{itemize}

We now construct an instance $(G(G_C),\HH,c,\thr,\FR,b)$ of the Group Hiding problem where:
\begin{itemize}
\item $G(G_C)=(V,E)$ is the network we just constructed,
\item $\HH=\{h\}$ is the set of evaders,
\item $c=\cclo$ is the closeness group centrality,
\item $\thr=\frac{|V|-1}{|V_C|+2|E_C|+1 +k+\frac{k(k-1)}{2}}$ is the safety threshold,
\item $\FR=\{(h,v_i) : v_i \in V_C \}$ is the set of edges allowed to be removed,
\item $b=k$ is the hiding budget.
\end{itemize}

In what follows, we will show that the constructed instance of the Group Hiding problem has a solution if and only if the given instance of the Finding $k$-Clique problem has a solution. First, we will prove a useful lemma.

\setcounter{lemma}{1}
\begin{lemma}
\label{lem:construct-closeness}
Let $V' \subseteq V$, and let $R = \{h\} \times V'$.
The set $V'$ induces at least $m$ edges in $G_C$ if and only if $\cclo(G(G_C) \setminus R, \{h\}) \leq \frac{|V_C|-1}{|V_C|+2|E_C|+1+|R|+m}$.
\end{lemma}

\begin{proof}
In what follows, let $G$ denote $G(G_C)$. Let us first compute the distances between $h$ and other nodes in $G$ after the removal of a given $R \subseteq \FR$:
\begin{itemize}
\item $d_{G \setminus R}(h,x) = 1$,
\item $d_{G \setminus R}(h,v_i) = 1$ if $(h,v_i) \notin R$,
\item $d_{G \setminus R}(h,v_i) = 2$ if $(h,v_i) \in R$,
\item $d_{G \setminus R}(h,e_{i,j}) = 2$ if $(h,v_i) \notin R \lor (h,v_j) \notin R$,
\item $d_{G \setminus R}(h,e_{i,j}) = 3$ if $(h,v_i) \in R \land (h,v_j) \in R$. 
\end{itemize}

Hence, we have that for a given $R \subseteq \FR$:
\[
\cclo(G \setminus R, \{h\}) = \frac{|V|-1}{|V_C|+2|E_C|+1+|R|+\alpha},
\]
where $\alpha$ is the number of $e_{i,j}$ such that $(h,v_i) \in R \land (h,v_j) \in R$.

Assume that set $V'$ induces at least $m$ edges in $G_C$. Each of these edges have both ends in $V'$, hence, because of the way in which $G$ is constructed, for nodes $e_{i,j}$ corresponding to these edges we have $(h,v_i) \in R \land (h,v_j) \in R$, which implies $\alpha \geq m$. From the above formula we get that:
\[
\cclo(G \setminus R, \{h\}) \leq \frac{|V|-1}{|V_C|+2|E_C|+1+|R|+m}.
\]
Hence, the first implication of the lemma is correct.

To prove the implication in the other direction, assume that 
\[
\cclo(G(G_C) \setminus R, \{h\}) \leq \frac{|V_C|-1}{|V_C|+2|E_C|+1+|R|+m}.
\]
From the above formula we get that this is the case if and only if $\alpha \geq m$. Hence, there are at least $m$ nodes $e_{i,j}$ such that $(h,v_i) \in R \land (h,v_j) \in R$. Because of the way in which $G$ is constructed, these nodes correspond to edges in $G_C$ with both ends in $V'$. Hence, $V'$ induces at least $m$ edges in $G_C$. This concludes the proof.
\end{proof}

Assume that the constructed instance of the Group Hiding problem has a solution, i.e., a set $\RR$ of size $k$ the removal of which from $G(G_C)$ brings the closeness group centrality of $\HH$ to at most $\thr=\frac{|V|-1}{|V_C|+2|E_C|+1 +k+\frac{k(k-1)}{2}}$. Let $V^*=\{v_i \in V_C : (h,v_i) \in \RR\}$. From Lemma~\ref{lem:construct-closeness} we get that $V^*$ induces $m$ edges in $G_C$, where $|\RR|+m=k+\frac{k(k-1)}{2}$. Since $|\RR|=k$, we get that $m=\frac{k(k-1)}{2}$, i.e., $V^*$ induces $\frac{k(k-1)}{2}$ edges in $G_C$. However, since $|V^*|=k$, it is only possible if $V^*$ induces a clique. Hence, $V^*$ is a solution to the given instance of the Finding $k$-Clique problem.

Assume that the given instance of the Finding $k$-Clique problem has a solution, i.e., a set of $k$ nodes $V^*$ that induces a clique in $G_C$. Since $k$-clique has $\frac{k(k-1)}{2}$ edges, from Lemma~\ref{lem:construct-closeness} we get that for $\RR=\{h\} \times V^*$ we have $\cclo(G(G_C) \setminus \RR, \{h\}) \leq \frac{|V|-1}{|V_C|+2|E_C|+1+k+\frac{k(k-1)}{2}}=\thr$. Since $|\RR|=k$, we have that $\RR$ is a solution to the constructed instance of the Group Hiding problem.

We showed that the constructed instance of the Group Hiding problem has a solution if and only if the given instance of the Finding $k$-Clique problem has a solution, thus completing the proof of NP-hardness.
\end{proof}

\begin{theorem}
\label{thr:betweenness-dec}
The Group Hiding problem is NP-complete given the betweenness group centrality.
\end{theorem}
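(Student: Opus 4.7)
The plan is to mirror the structure of the proof of Theorem~\ref{thr:closeness-dec}. Membership in NP is immediate: given a candidate $\RR \subseteq \FR$ with $|\RR| \leq b$, Brandes' algorithm (already cited in the preliminaries) computes $\cbet(G \setminus \RR, \HH)$ in polynomial time, so the condition $\cbet(G \setminus \RR, \HH) \leq \thr$ is polynomial-time checkable.

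For NP-hardness I would again reduce from Finding $k$-Clique on an input $(G_C,k)$ with $V_C = \{v_1,\ldots,v_n\}$. The construction I have in mind is a betweenness-adapted analog of the gadget used in Theorem~\ref{thr:closeness-dec}. Take $\HH = \{h\}$ and build a graph $G(G_C)$ containing the vertices $V_C$, an edge-node $e_{i,j}$ for every $(v_i,v_j) \in E_C$, the evader $h$, and additional bookkeeping nodes (for instance a sink $x$ and small pendant gadgets attached to each $e_{i,j}$) whose sole role is to force the shortest paths between well-chosen pairs to pass through $h$ by default. The edge set contains $(h,v_i)$ for every $v_i$, the edges $(e_{i,j},v_i)$ and $(e_{i,j},v_j)$, and the auxiliary edges needed to route the relevant shortest paths. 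The removable edges are $\FR = \{(h,v_i) : v_i \in V_C\}$ and the budget is $b=k$.

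The key lemma to prove is analogous to Lemma~\ref{lem:construct-closeness}: writing $V^* = \{v_i : (h,v_i) \in \RR\}$ and letting $m$ be the number of edges induced by $V^*$ in $G_C$, the betweenness $\cbet(G(G_C) \setminus \RR, \{h\})$ is a strictly monotone, explicit function of $|\RR|$ and $m$. The intuition is that deleting $(h,v_i)$ alone should produce a uniform, pre-computable decrement because every $v_i$ plays a symmetric role in the gadget, while deleting a pair $\{(h,v_i),(h,v_j)\}$ should yield an additional decrement exactly when $(v_i,v_j) \in E_C$, thanks to the $e_{i,j}$ gadget: once both $(h,v_i)$ and $(h,v_j)$ are cut, the specific family of shortest paths between $e_{i,j}$ (and its pendants) and the rest of the graph no longer traverses $h$. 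Setting $\thr$ to the value attained precisely when $|\RR|=k$ and $m=\binom{k}{2}$ would make the Group Hiding instance feasible if and only if $V^*$ induces a $k$-clique.

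The main obstacle I anticipate is the betweenness accounting itself. Closeness behaves additively over distances and therefore tolerates a loose shortest-path structure; betweenness instead counts fractions of shortest paths, so I must rule out accidental reroutings that do not correspond to an edge of $G_C$, as well as symmetry-breaking contributions from pairs that already have many alternative short paths avoiding $h$. The auxiliary pendant gadgets are there precisely to make each $e_{i,j}$'s ``signal'' clean and additive, and a careful case analysis over the pair types (pairs within $V_C$, pairs between $V_C$ and the edge-nodes, pairs among edge-nodes, and pairs involving the auxiliary nodes) will be needed to verify that the betweenness indeed takes the promised monotone form. Once this case analysis is in place, stating and verifying both directions of the reduction is a direct adaptation of the argument used in Theorem~\ref{thr:closeness-dec}.
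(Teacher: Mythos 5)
Your NP-membership argument matches the paper's. For hardness, however, what you have is a plan rather than a proof, and the plan leaves the hardest part unresolved. The entire reduction rests on the claim that $\cbet(G(G_C)\setminus \RR,\{h\})$ is an explicit, strictly monotone function of $|\RR|$ and the number $m$ of induced edges; you do not prove this, you do not specify the ``auxiliary pendant gadgets'' that are supposed to make it true, and there is good reason to doubt it holds for the natural construction. Unlike closeness, which aggregates distances additively so that each deletion contributes a clean, structure-independent increment, betweenness sums the fractions $\pi_{s,t}(\{h\})/\pi_{s,t}$ over all non-group pairs, and these fractions depend on how many alternative shortest paths avoid $h$ --- which in the clique gadget depends on the adjacency structure of $G_C$ itself, not just on $|\RR|$ and $m$. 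For instance, for a pair $v_i,v_j$ with $(v_i,v_j)\in E_C$, the two-hop paths through $e_{i,j}$, through $x$, and through $h$ all compete, and the denominator $\pi_{v_i,v_j}$ grows with the number of common neighbors of $v_i$ and $v_j$ in $G_C$; similar structure-dependence afflicts pairs of edge-nodes. You correctly identify this as the main obstacle, but flagging an obstacle and promising ``a careful case analysis'' is not the same as overcoming it, so the reduction as stated is not established.

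The paper avoids all of this by reducing from a different problem: Multiway Cut with at least three terminals. It sets the evader group to be the \emph{entire} original vertex set $V_M$, attaches one pendant node $x_i$ to each terminal $v_i\in S$, allows removal of exactly the original edges $E_M$, and sets $\thr=0$. Then the only non-group pairs are the pendants, every path between two pendants must pass through the group, and so the group betweenness is zero if and only if every pair of terminals is disconnected --- which is exactly the Multiway Cut condition. By demanding that the centrality drop all the way to zero, the paper never has to count shortest paths at all; it only has to reason about connectivity. If you want to salvage a clique-based reduction you would need to fully construct and verify the gadget you sketch, but the cleaner course is the threshold-zero trick, which turns a fractional-counting problem into a purely combinatorial one.
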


\begin{proof}
The problem is trivially in NP, since after the removal of a given $\RR \in \FR$ we can compute the betweenness group centrality of $\HH$ in polynomial time to confirm it is at most $\thr$.

We will prove the NP-hardness of the Group Hiding problem by showing a reduction from the Multiway Cut problem. An instance $(G_M,S,\mu)$ of this problem consists of a network $G_M=(V_M,E_M)$, a set of terminal nodes $S \subseteq V_M$, and a constant $\mu \in \N$. The goal is to determine whether there exist $\mu$ edges in $E_M$ such that after their removal there exists no path between any pair of nodes in $S$. The Multiway Cut problem is NP-hard given that $|S| \geq 3$~\cite{dahlhaus1994complexity}.

\begin{figure}[t]
\centering
\includegraphics[width=.7\columnwidth]{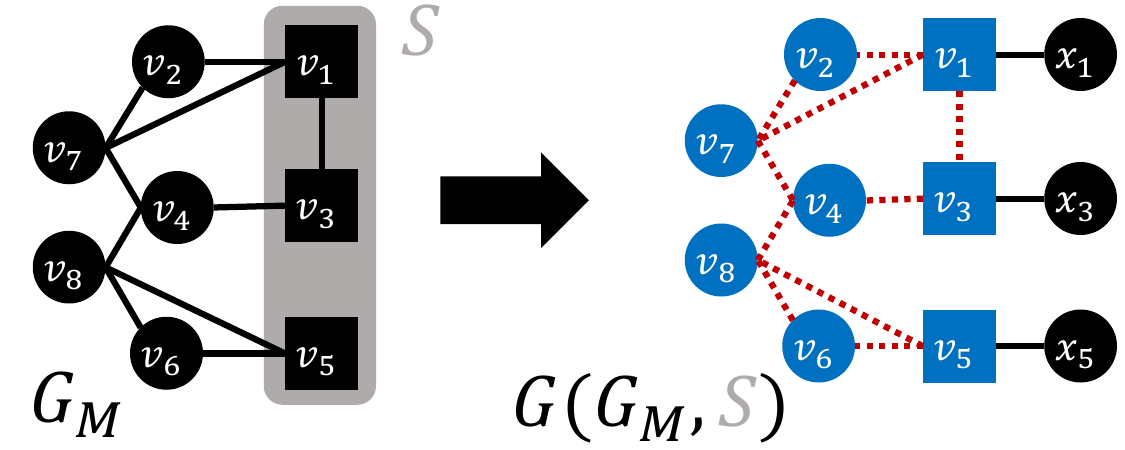}
\caption{Construction used in the proof of Theorem~\ref{thr:betweenness-dec}. Square nodes belong to $S$ in the Multiway Cut problem instance. Blue nodes belong to $\HH$, and red dotted edges belong to $\FR$ in the Group Hiding problem instance.}
\label{fig:betweenness-dec}
\end{figure}

Let $X=(G_M,S,\mu)$ be an arbitrary instance of the Multiway Cut problem where $|S| \geq 3$, and where $V_M=\{v_1,\ldots,v_n\}$. We first construct a network $G(G_M,S)=(V,E)$ (an example of this construction is presented in Figure~\ref{fig:betweenness-dec}):
\begin{itemize}
\item $V = V_M \cup \bigcup_{v_i \in S} \{x_i\}$,
\item $E = E_M \cup \bigcup_{v_i \in S}\{(v_i,x_i)\}$.
\end{itemize}

We now construct an instance $(G(G_M,S),\HH,c,\thr,\FR,b)$ of the Group Hiding problem where:
\begin{itemize}
\item $G(G_M,S)=(V,E)$ is the network we just constructed,
\item $\HH=V_M$ is the set of evaders,
\item $c=\cbet$ is the betweenness group centrality,
\item $\thr=0$ is the safety threshold,
\item $\FR=E_M$ is the set of edges allowed to be removed,
\item $b=\mu$ is the hiding budget.
\end{itemize}

In what follows, we will show that the constructed instance of the Group Hiding problem has a solution if and only if the given instance of the Multiway Cut problem has a solution. First, we will prove a useful lemma.

\begin{lemma}
\label{lem:construct-betweenness}
The removal of a set $R \subseteq E_M$ from $G_M=(V_M,E_M)$ disconnects all pairs of nodes in $S \subseteq V_M$ if and only if $\cbet(G(G_M,S) \setminus R, V_M) \leq 0$.
\end{lemma}

\begin{proof}
In what follows, let $G$ denote $G(G_M,S)$.

Assume that the removal of $R$ from $G_M$ disconnects all pairs of nodes in $S$.  From the definition of the betweenness group centrality we have that:
\[
\cbet(G \setminus R,V_M) = \frac{2}{|S|(|S|-1)} \sum_{x_i \neq x_j} \frac{\pi_{x_i,x_j}(V_M)}{\pi_{x_i,x_j}}.
\]
Since the only node connected with any $x_i \in V$ is $v_i$, any path between $x_i$ and $x_j$ have to include the path between $v_i$ and $v_j$ running through the nodes in $V_M$. However, since $v_i,v_j \in S$ and since the only edges between the nodes in $V_M$ are the edges in $E_M$, the removal of $R$ from $G$ disconnects any such pair of nodes $v_i$ and $v_j$. Consequently, there is no path between any pair of nodes $x_i$ and $x_j$ in $G \setminus R$, hence $\cbet(G \setminus R, V_M) = 0$.

To prove the implication in the other direction, assume that $\cbet(G \setminus R, V_M) \leq 0$. From the definition of the betweenness group centrality we have that if there exists at least one path between nodes $x_i$ and $x_j$ in $G \setminus R$ then $\cbet(G \setminus R, V_M) > 0$, as any such path must include at least the nodes $v_i,v_j \in V_M$, the only nodes connected to $x_i$ and $x_j$. Therefore, the removal of $R$ from $G$ disconnects all pairs of nodes $v_i,v_j$. Because of the way in which the network $G$ is constructed, these are precisely all the pairs of nodes in $S$. Since the only edges between the nodes in $V_M$ are the edges in $E_M$, the removal of $R$ from $G_M$ disconnects all pairs of nodes in $S$. This concludes the proof.
\end{proof}

Assume that the constructed instance of the Group Hiding problem has a solution, i.e., a set $\RR$ of size at most $b$ the removal of which from $G(G_M,S)$ brings the betweenness group centrality of $\HH$ to at least $\thr$. Since $\HH=V_M$ and $\thr = 0$, we have that $\cbet(G(G_M,S) \setminus R, V_M) \leq 0$. Therefore, from Lemma~\ref{lem:construct-betweenness} we get that the removal of $\RR$ from $G_M$ disconnects all pairs of nodes in $S$. Since $|\RR| \leq b = \mu$, the set $\RR$ is also a solution to the given instance of the Multiway Cut problem.

Assume that the given instance of the Multiway Cut problem has a solution, i.e., a set $R$ of size at most $\mu$ the removal of which from $G_M$ disconnects all pairs of nodes in $S$. From Lemma~\ref{lem:construct-betweenness} we get that  $\cbet(G(G_M,S) \setminus R, V_M) \leq 0$. Since $\HH=V_M$, $\thr = 0$, and $b=\mu$, $R$ is also a solution to the constructed instance of the Group Hiding problem.

We showed that the constructed instance of the Group Hiding problem has a solution if and only if the given instance of the Multiway Cut problem has a solution, thus completing the proof of NP-hardness.
\end{proof}

\begin{theorem}
\label{thr:betweenness-apx}
The Minimum Group Hiding problem is APX-hard given the betweenness group centrality.
\end{theorem}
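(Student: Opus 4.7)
The plan is to re-use the polynomial-time reduction constructed in the proof of Theorem~\ref{thr:betweenness-dec} and argue that it is in fact an approximation-preserving reduction (an L-reduction with parameters $\alpha=\beta=1$) from the Minimum Multiway Cut problem, which is known to be APX-hard once $|S|\geq 3$ by Dahlhaus et al.~\cite{dahlhaus1994complexity}.

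First, I would recall the construction: given an instance $(G_M,S,\mu)$ of Multiway Cut with $|S|\geq 3$, build the network $G(G_M,S)$ as before and consider the Minimum Group Hiding instance $(G(G_M,S),\HH,\cbet,\thr,\FR)$ with $\HH=V_M$, $\thr=0$, and $\FR=E_M$. The hiding budget is dropped because we are now in the optimization version. By Lemma~\ref{lem:construct-betweenness}, for every $R\subseteq E_M$, the set $R$ is a feasible multiway cut (disconnecting all pairs in $S$) if and only if $\cbet(G(G_M,S)\setminus R, V_M)\leq 0=\thr$, i.e., $R$ is feasible for the Minimum Group Hiding instance. Moreover, the two problems measure cost in the same way: the number of removed edges.

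Consequently, the two optimum values coincide, and any feasible solution to one of the two instances is a feasible solution of the other with identical cost. Thus, if some hypothetical polynomial-time algorithm approximated Minimum Group Hiding for $\cbet$ within a factor $\rho$, applying it to the constructed instance and returning the output (which lies in $E_M$) would yield a $\rho$-approximation for Minimum Multiway Cut. Since Minimum Multiway Cut does not admit a PTAS unless P = NP, neither does Minimum Group Hiding for the betweenness group centrality.

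I do not expect any serious obstacle, since all the nontrivial structural work has already been performed in Theorem~\ref{thr:betweenness-dec} and Lemma~\ref{lem:construct-betweenness}; the only thing to verify is that no subtle discrepancy is introduced when moving from the decision version to the optimization version. The one point that deserves a line of care is that feasibility of $R$ in the Group Hiding instance forces $R\subseteq\FR=E_M$, so the solution returned is automatically a valid edge set of $G_M$ and can be used as-is as a multiway cut, without any postprocessing step that might inflate the cost.
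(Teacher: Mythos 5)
Your proposal is correct and follows essentially the same route as the paper: both reuse the construction and Lemma~\ref{lem:construct-betweenness} from Theorem~\ref{thr:betweenness-dec} to obtain an L-reduction from Minimum Multiway Cut in which the solution map is the identity, feasible solutions transfer in both directions with identical cost, and the optima therefore coincide. The only cosmetic difference is that the paper explicitly checks the four conditions of the L-reduction definition, whereas you argue the equivalent fact that the reduction is cost-preserving and hence approximation-preserving.
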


\begin{proof}
We will prove the APX-hardness of the Minimum Group Hiding problem by showing an L-reduction from the Minimum Multiway Cut problem. An instance $(G_M,S)$ of this problem consists of a network $G_M=(V_M,E_M)$, and a set of terminal nodes $S \subseteq V_M$. The goal is to find $F \subseteq E_M$ such that the removal of $F$ from $G_M$ disconnects all pairs of nodes in $S$ and the size of $R$ is minimal. The Minimum Multiway Cut problem is APX-hard given that $|S| \geq 3$~\cite{dahlhaus1994complexity}.

The L-reduction consists of a pair of polynomial time computable functions $p$ and $q$ such that for  an arbitrary instance $X=(G_M,S)$ of the Minimum Multiway Cut problem:

\begin{enumerate}
\item $p(X)$ is an instance of the Minimum Group Hiding problem;
\label{pt:apx-hard-1}

\item if $R$ is a solution to $p(X)$ then $q(R)$ is a solution to $X$;
\label{pt:apx-hard-2}

\item An optimal solution $F^*$ to $X$ and an optimal solution $\RR$ to $p(X)$ are of the same size, i.e., $|\RR| = |F^*|$;
\label{pt:apx-hard-3}

\item For every solution $R$ to $p(X)$, we have that $\left| |F^*| - |q(R)| \right| = \left| |\RR| - |R| \right|$, where $F^*$ is an optimal solution to $X$ and $\RR$ is an optimal solution to $p(X)$.
\label{pt:apx-hard-4}
\end{enumerate}

We first define the function $p$ (point~\ref{pt:apx-hard-1} of the L-reduction definition) as:
\[
p(G_M,S) = (G(G_M,S),\HH,c,\thr,\FR),
\]
where:
\begin{itemize}
\item $G(G_M,S)$ is the network defined in the proof of Theorem~\ref{thr:betweenness-dec},
\item $\HH=V_M$ is the set of evaders,
\item $c=\cbet$ is the betweenness group centrality,
\item $\thr=0$ is the safety threshold,
\item $\FR=E_M$ is the set of edges allowed to be removed.
\end{itemize}

Second, we define the function $q$ as $q(R)=R$, i.e., the solution to the instance $X$ of the Minimum Multiway Cut problem is the same set of edges as the solution to the instance $p(X)$ of the Minimum Group Hiding problem. We will now show that if $R$ is a solution to $p(X)$, then $q(R)$ is indeed a solution to $X$ (point~\ref{pt:apx-hard-2} of the L-reduction definition). Indeed from Lemma~\ref{lem:construct-betweenness} we get that if $\cbet(G(G_M,S) \setminus R, V_M) \leq 0$ (which is a necessary condition for $R$ to be a solution to $p(X)$, based on the definition of the Minimum Group Hiding problem), then the removal of $R$ from $G_M$ disconnects all pairs of nodes in $S$. Hence, $q(R)=R$ is a solution to $X$.

To prove point~\ref{pt:apx-hard-3} of the L-reduction definition, assume that $X$ and $p(X)$ have optimal solutions of different sizes. However, from Lemma~\ref{lem:construct-betweenness} we get that a solution to one of the problem instances is also a solution to the other instance. Hence, the smaller of the two optimal solution is a solution to the other instance, smaller than the optimum, which causes a contradiction. Therefore, both instances necessarily have the same optimal solution size.

Finally, notice that since $|F^*|=|\RR|$ (as argued in the previous paragraph) and $q(R)=R$ (from the definition of function $q$ above) we have:
\[
\left| |F^*| - |q(R)| \right| = \left| |\RR| - |R| \right|.
\]
Therefore, point~\ref{pt:apx-hard-4} of the L-reduction definition is satisfied.

Therefore, we showed that $p$ and $q$ constitute an L-reduction from the Minimum Multiway Cut problem to the problem of Minimum Group Hiding problem given the betweenness group centrality. Since the Minimum Multiway Cut problem is APX-hard~\cite{dahlhaus1994complexity}, the Minimum Group Hiding problem is also APX-hard, as L-reductions preserve APX-hardness~\cite{crescenzi1997short}.
\end{proof}

\section{Experimental Analysis}
\label{sec:experimental-analysis}

In this section we conduct experiments comparing the effectiveness of different ways of hiding from group centralities in randomly-generated and real-life networks.

\subsection{Network Datasets}

In our simulations we use the following models of generating random networks:
\begin{itemize}
\item \textbf{\BAn model}~\cite{barabasi1999emergence} generating scale-free networks with preferential attachment,
\item \textbf{\WSn model}~\cite{watts1998collective} generating small-world networks,
\item \textbf{\ERn model}~\cite{erdds1959random} generating networks with a uniform distribution of edges.
\end{itemize}

Unless stated otherwise, in our simulation we consider random networks with $5000$ nodes, and an average degree of $10$. In case of the \WSn model we set the rewiring probability to $\frac{1}{4}$.

\begin{table}[t!]
\centering
\begin{tabular}{l c c}
\toprule
Network & $|V|$ & $|E|$ \\
\midrule
Bitcoin Alpha & $3,775$ & $14,120$ \\
Bitcoin OTC & $5,875$ & $21,489$ \\
Gnutella & $6,299$ & $20,776$ \\
PGP & $10,680$ & $24,316$ \\
\bottomrule
\end{tabular}
\caption{The properties of real-life network datasets.}
\label{tab:real-life}
\end{table}

We also consider the following real-life networks (their details are presented in Table~\ref{tab:real-life}):
\begin{itemize}
\item \textbf{Bitcoin Alpha}~\cite{kumar2016edge} the giant component of the trust network between the users of Bitcoin Alpha trading platform,
\item \textbf{Bitcoin OTC}~\cite{kumar2016edge} the giant component of the trust network between the users of Bitcoin OTC trading platform,
\item \textbf{Gnutella}~\cite{ripeanu2002mapping} the network of connections between hosts of Gnutella peer-to-peer sharing network (snapshot from August 8, 2002),
\item \textbf{PGP}~\cite{boguna2004models} the network of users who signed their public keys using Pretty-Good-Privacy algorithm for secure information exchange in 2004.
\end{itemize}

\subsection{Experimental Procedure}

Given a network $G=(V,E)$, we first select the group of evaders $\HH$. We consider the following ways of doing so (all ties are resolved uniformly at random):
\begin{itemize}
\item \textbf{Dense}---first we add a random node to $\HH$, and then we repeatedly add a node that maximizes the number of connections with the members of $\HH$. 
\item \textbf{Cells}---$\HH$ consists of a number of separate cells with sizes selected using an algorithm that generates terrorist cells~\cite{tsvetovat2005generation}, with the members of each cell selected using the same procedure as for the Dense selection criterion. 
\item \textbf{Scattered}---the members of $\HH$ are selected uniformly at random out of all the network nodes. 
\end{itemize}

We chose these particular three ways of selecting $\HH$ to compare across a variety of possible distributions of evaders in the network. The first method results in a cohesive group of evaders, consistent with an assumption common in the community detection literature that closely cooperating groups of nodes typically have many intra-community connections~\cite{fortunato2010community}. The second method results in multiple smaller groups of nodes spread across the network, a structure often exhibited by terrorist and criminal organizations~\cite{tsvetovat2005generation}, who have strong incentives to remain undetected. Finally, the last method results in evaders distributed across the network, and provides a strong contrast to the other two methods.

Unless stated otherwise, we set the size of $\HH$ to $50$ in our experiments. We assume that the evaders can hide by removing from the network edges in $\FR$, i.e., edges incident with at least one member of $\HH$. We consider the following hiding strategies:
\begin{itemize}
\item \textbf{Optimal degree}---removes an edge selected using Algorithm~\ref{alg:degree-opt} (the optimal strategy of hiding from the degree centrality). We consider this strategy as the only optimal hiding strategy (against the considered centrality measures) that can be computed in polynomial time.
\item \textbf{Internal}---removes a random edge from $\FR$ connecting two members of $\HH$. This is probably the easiest strategy to implement in practice, as it does not require any cooperation from nodes outside the group of evaders. In principle, this strategy could be effective against  betweenness and GED-walk centralities, as it may result in breaking shortest paths or random walks including two or more evaders.
\item \textbf{Random}---removes a random edge from $\FR$. This strategy serves as a baseline against which other strategies can be compared.
\item \textbf{Shortcut}---remove an edge between a member of $\HH$ and a node $v^* \in N(\HH)$ such that the average distance between $v^*$ and all other nodes in $G \setminus \HH$ is minimal. This strategy is based on an idea of disconnecting $\HH$ from nodes that provide quick access to the rest of the network, as both the closeness and betweenness group centrality measures operate based on the shortest paths.
\end{itemize} 

For any given combination of the network and the selection criterion, we select $10$ different groups $H$. For each group we attempt to hide it using each of the strategies while setting the hiding budget (i.e., the number of edges that can be removed from the network) to $|\HH|$. We measure the value of each centrality measure before and after the hiding process. In case of the random network generation models, we repeat the procedure $100$ times, each time generating a new network.

\begin{figure*}[t]
\centering
\includegraphics[width=\linewidth]{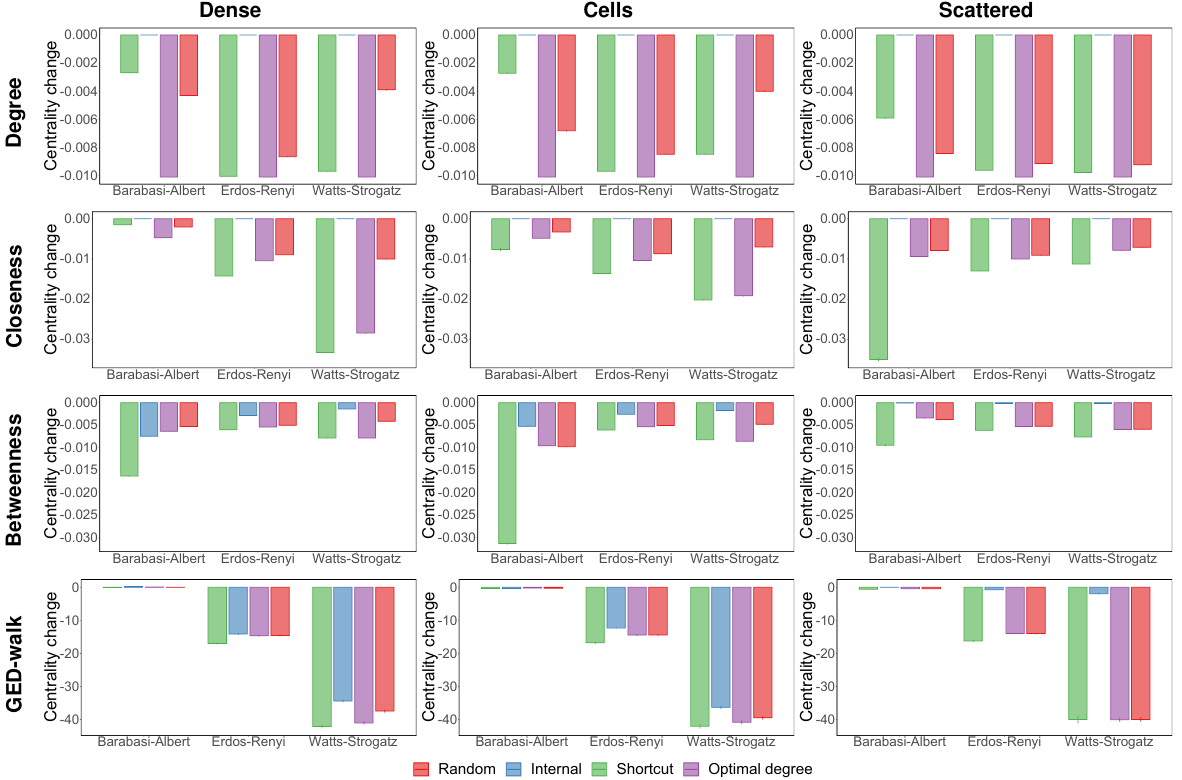}
\caption{Results of hiding groups in random networks with $5000$ nodes and an average degree of $10$. Each row corresponds to a different centrality measure, while each column corresponds to a different way of selecting $\HH$. In each plot the x-axis corresponds to a different network generation model, while the y-axis corresponds to the change in centrality value after the hiding process. Each bar represents a different hiding strategy, with the error bars representing $95\%$-confidence intervals.
}
\label{fig:basic-bars-random}
\end{figure*}

Given a very different scale of possible values for different centrality measures, we do not set a fixed safety threshold $\thr$, but rather present a comparison of what thresholds could be satisfied in different networks. Keep in mind that in practical application, a group of evaders might be unsure which centrality is going to be used against them, hence they might want to hide from a variety of measures, rather than set a specific threshold for a single one.

\begin{figure*}[t]
\centering
\includegraphics[width=\linewidth]{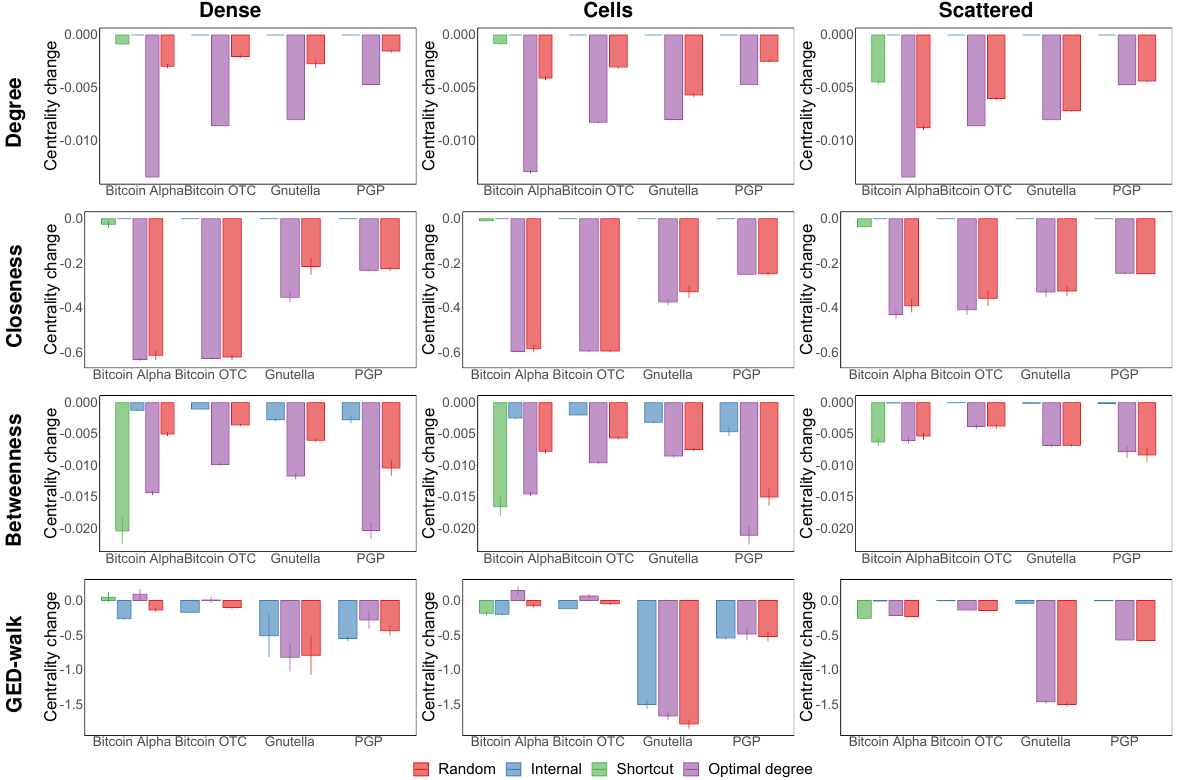}
\caption{Results of hiding groups in real-life networks. Each row corresponds to a different centrality measure, while each column corresponds to a different way of selecting $\HH$. In each plot the x-axis corresponds to different network dataset, while the y-axis corresponds to a change in centrality value after the hiding process. Each bar represents different hiding strategy, with the error bars representing $95\%$-confidence intervals.
}
\label{fig:basic-bars-real}
\end{figure*}

It is worth noting that after removal some of the edges during the hiding process, the network might become disconnected. Nevertheless, it does not pose a problem, as all of the considered centrality measures have well defined values given a disconnected network. The degree centrality is only concerned with immediate neighbors of $\HH$, not matter the connectivity of the network. The closeness centrality of $\HH$ drops to zero if there exist connected components that do not contain any members of $\HH$ (as stated in Section~\ref{sec:preliminaries}). The betweenness centrality set the contributions of pair of nodes in different connected components to zero (again, as stated in Section~\ref{sec:preliminaries}). The GED-walk centrality considers all walks of a given length, which in a disconnected network can be computed separately for each connected component.

\subsection{Results of the Experiments}

\begin{figure*}[t]
\centering
\includegraphics[width=\linewidth]{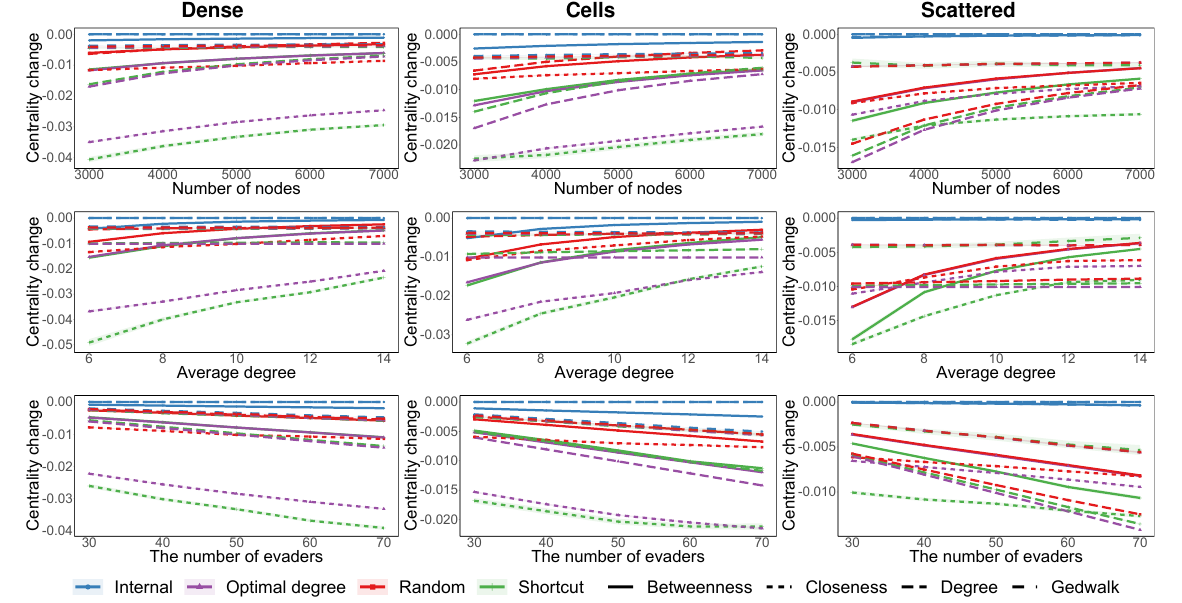}
\caption{Results of hiding groups in \WSn networks with changing parameters. Each row corresponds to a different parameter being modified, either the size of the network, its average degree, or the size of $\HH$. Each column corresponds to a different way of selecting $\HH$. In each plot the x-axis corresponds the selected parameter, while the y-axis corresponds to the change in centrality value after the hiding process. Each line type corresponds to a different centrality measure, while each color and point shape correspond to a different hiding strategy. Colored areas (very narrow in most cases) represent $95\%$-confidence intervals. To improve readability at the same scale, values for GED-walk centrality are multiplied by $10^{-4}$.
}
\label{fig:params-ws}
\end{figure*}

Figure~\ref{fig:basic-bars-random} summarizes the outcomes of the hiding process in random networks, while Figure~\ref{fig:basic-bars-real} presents the results for real-life networks. Regarding the effectiveness of different hiding strategies, the optimal degree strategy proves to be relatively effective against all centrality measures, often exhibiting the best performance out of the three strategies utilizing only local information. The shortcut strategy is also effective in most cases, although it requires information about the entire network, making it less applicable in practice. Removing edges between the evaders is ineffective in most cases, performing worse than the random baseline. Interestingly, groups selected using all three criteria, i.e., dense, cells, and scattered, are equally difficult to hide, despite representing very different types of structures. Likewise, the hiding strategies have similar levels of performance in networks generated using all three models, i.e., \BAn, \WSn, and \ERn. These findings indicate that the most crucial aspect of hiding from group centrality is the centrality measure used to run the analysis, rather the character of the network structure or the distribution of evaders throughout the network.

The results above are presented for networks with $5000$ nodes. In real-life scenarios, the networks can be much larger, consisting of hundreds of thousands or even millions of nodes. However, it is worth noting that the main computational bottleneck of our simulations is the computation of centrality measures, rather than the hiding strategies. The time complexity of most of our hiding heuristics (other than the shortcut heuristic) depends only on the size of the network vicinity of $\HH$, rather than the size of the entire network. In other words, it is possible to employ our hiding heuristics even in much larger networks, in which evaluating their effectiveness is computationally unfeasible (particularly in terms of the betweenness centrality, the computation of which is the most time consuming).

\begin{figure*}[t]
\centering
\includegraphics[width=\linewidth]{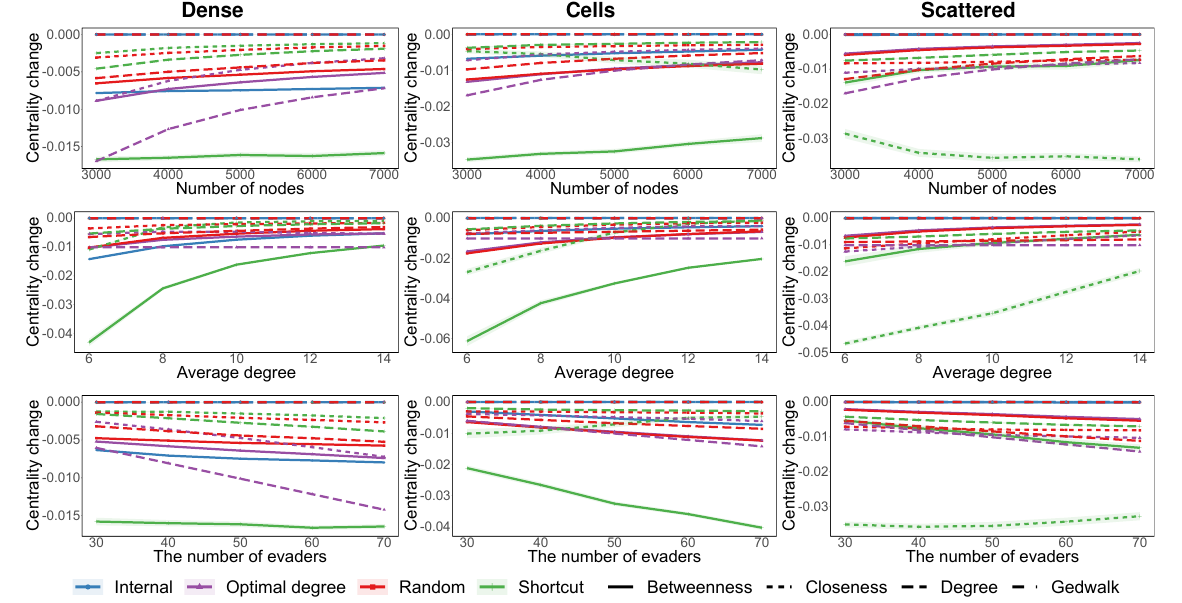}
\caption{Results of hiding groups in \BAn networks with changing parameters. Each row corresponds to a different parameter being modified, either the size of the network, its average degree, or the size of $\HH$. Each column corresponds to a different way of selecting $\HH$. In each plot the x-axis corresponds the selected parameter, while the y-axis corresponds to a change in centrality value after the hiding process. Each line type corresponds to a different centrality measure, while each color and point shape to a different hiding strategy. Colored areas (very narrow in most cases) represent $95\%$-confidence intervals. To improve readability at the same scale, values for GED-walk centrality are multiplied by $10^{-4}$.
}
\label{fig:params-ba}
\end{figure*}

\begin{figure*}[tbh]
\centering
\includegraphics[width=\linewidth]{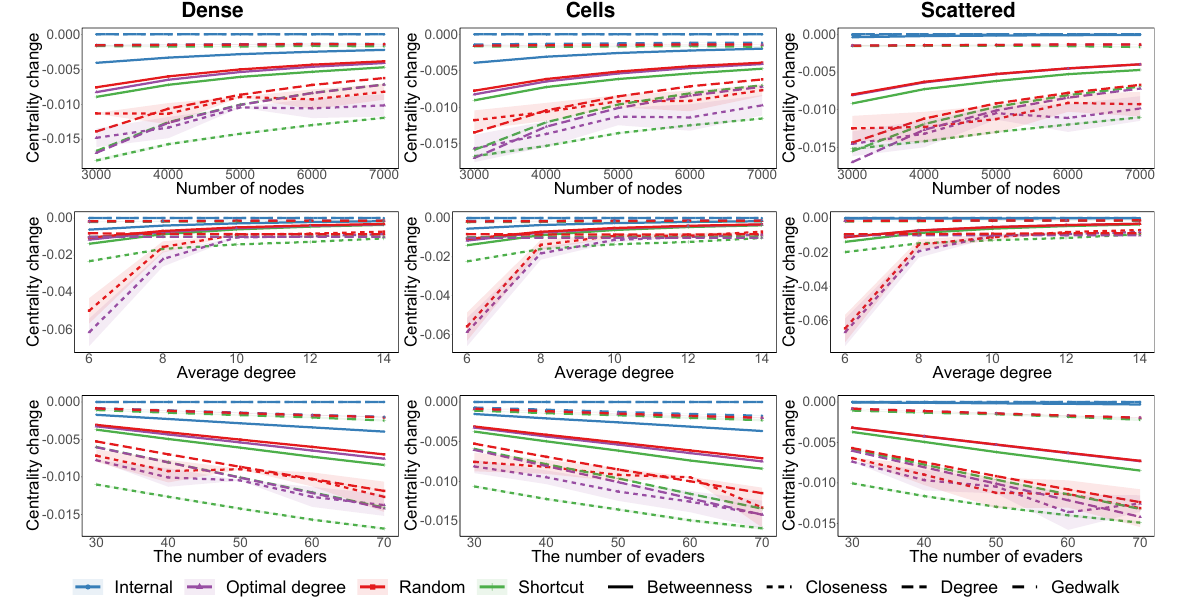}
\caption{Results of hiding groups in \ERn networks with changing parameters. Each row corresponds to a different parameter being modified, either the size of the network, its average degree, or the size of $\HH$. Each column corresponds to a different way of selecting $\HH$. In each plot the x-axis corresponds the selected parameter, while the y-axis corresponds to a change in centrality value after the hiding process. Each line type corresponds to a different centrality measure, while each color and point shape to a different hiding strategy. Colored areas (very narrow in most cases) represent $95\%$-confidence intervals. To improve readability at the same scale, values for GED-walk centrality are multiplied by $10^{-4}$.
}
\label{fig:params-er}
\end{figure*}

Given that the effectiveness of hiding does not seem to depend on the model used to generate the network structure, can the same be said about the size and the density of the network, as well as the size of the group of evaders? Figures~\ref{fig:params-ws}, \ref{fig:params-ba} and \ref{fig:params-er} present the results of our simulations with \WSn, \BAn, and \ERn networks respectively, while varying the number of nodes, the average degree of the nodes, and the size of $\HH$. As can be seen in the figures, the effectiveness of hiding strategies is on average greater in small and sparse networks, and for groups with many members (keep in mind that we assume the hiding budget equal to the size of the group of evaders). These seem like relatively good news for the evaders, as real-life networks are often sparse, and enlisting additional members can actually make obscuring the importance of the entire organization easier.

\begin{figure*}[tbh]
\centering
\includegraphics[width=\linewidth]{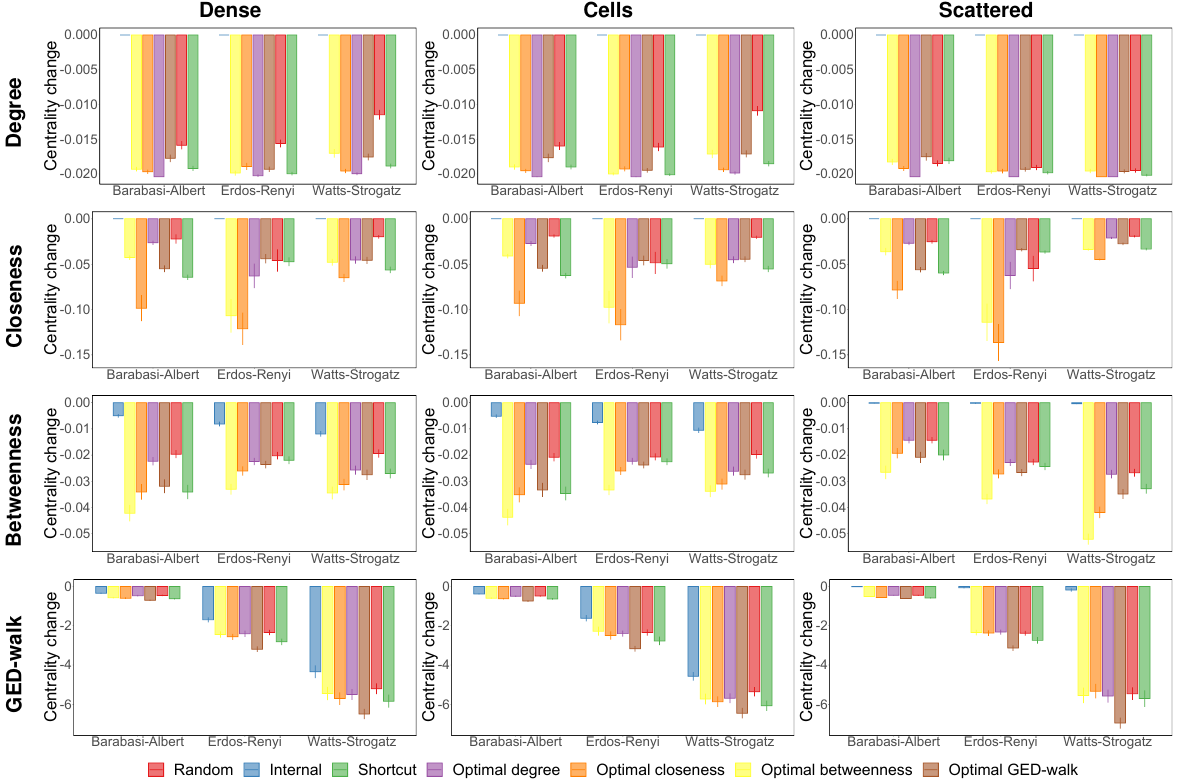}
\caption{Results of hiding groups in random networks when compared to optimal strategies. Each row corresponds to a different centrality measure, while each column corresponds to a different way of selecting $\HH$. In each plot the x-axis corresponds to different network generation model, while the y-axis corresponds to the change in centrality value after the hiding process. Each bar represents a different hiding strategy, with the error bars representing $95\%$-confidence intervals.
}
\label{fig:optimal-bars}
\end{figure*}

So far, the only centrality measure against which we were able to hide optimally was the degree centrality. Now, we compare the effectiveness of the proposed hiding strategies to the optimal hiding strategies for closeness, betweenness, and GED-walk centralities. To this end, we consider networks with $200$ nodes and an average degree of $4$, as well as groups of evaders with $4$ members. In these much smaller networks we are able to examine all possible subsets of $\FR$ of size at most $b$, in order to identify the optimal hiding strategy. The results of this analysis are presented in Figure~\ref{fig:optimal-bars}. As can be seen from the figure, the optimal degree strategy, which can be computed in polynomial time, has an effectiveness that falls between $50\%$ and $100\%$ compared to that of the optimum strategies in most cases; the only exception being the closeness centrality in scale-free networks. This finding suggests that in situations in which the evaders cannot evaluate all possible ways of hiding, or they are not sure which centrality measure is going to be used against them, running the optimal degree strategy seems to be a viable choice.

\section{Conclusions}
\label{sec:conclusions}

In this work we analyzed the problem of hiding the importance of a group of nodes from group centrality measures. We showed that finding an optimal solution in polynomial time is possible for the degree group centrality. However, the same problem is NP-complete for the closeness and the betweenness measures. Moreover, approximating the optimal solution is APX-hard for the betweenness group centrality. Nevertheless, our empirical analysis suggests that the optimal strategy against degree centrality is a viable choice against the other two centrality measures. We also found that it is generally easier to hide in small and sparse networks, even if the size of the group increases. Altogether, our work is the first step in the rigorous analysis of the problem of evading group centrality measures.

Potential directions of future work include analyzing alternative group centrality measures, designing a network structure where a specific group maintains low values of all centrality measures, and developing group centralities that are hard to evade.

\bibliographystyle{abbrv}
\bibliography{bibliography-group-centrality}

\end{document}